\documentclass[11pt]{article}

\usepackage[margin=1in]{geometry}
\usepackage[T1]{fontenc}
\usepackage{newtxtext}
\usepackage{microtype}
\usepackage{amsmath,amsthm,mathtools}
\usepackage{newtxmath}
\mathtoolsset{showonlyrefs}
\usepackage{booktabs}
\usepackage{tabularx,array}
\usepackage{enumitem}
\usepackage{float}
\usepackage{xcolor}
\usepackage{hyperref}

\usepackage{multirow}
\usepackage{placeins}
\usepackage{subcaption}

\hypersetup{
  colorlinks=true,
  linkcolor=blue,
  citecolor=blue,
  urlcolor=blue,
  pdftitle={Affine Pricing Models from Group Quantization and Holonomy},
  pdfauthor={Santiago Garcia}
}

\newtheorem{theorem}{Theorem}[section]
\newtheorem{proposition}[theorem]{Proposition}

\newcommand{\Rplus}{\mathbb R_{+}}

\allowdisplaybreaks



\title{\textbf{Affine Pricing Models from Group Quantization and Holonomy}}
\author{Santiago Garc\'ia\\[0.25em]\normalsize Independent Researcher}
\date{September 2026}

\begin{document}

\maketitle

\begin{abstract}

The analytic tractability of affine pricing models is usually expressed through two
complementary formulations: a coordinate-space pricing operator and an
exponential-affine transform representation governed by generalized Riccati
equations. We develop \emph{Affine Holonomy Group Quantization} (AHGQ) as a
geometric framework in which these two formulations arise from the same
underlying structure.

The construction separates the affine pricing symbol into a homogeneous
quadratic sector and a complementary affine sector. The first generates a
finite-dimensional symplectic transport and a centrally extended Lie group,
while the second is represented by a multiplicative holonomy carried by a
thin-path groupoid. Their combination determines an affine
Poincar\'e--Cartan form. Its characteristic dynamics reduce in momentum
variables to the generalized Riccati system and its scalar amplitude, whereas
the coordinate representation recovers the standard affine pricing operator.
Representative Gaussian and square-root models illustrate the construction.
The contribution is structural: AHGQ gives a common geometric origin to the
coordinate and transform representations of continuous-path,
time-homogeneous affine pricing models.

\end{abstract}

\newpage

\tableofcontents
\clearpage
\section{Introduction}
\label{sec:introduction}

Affine models are central in mathematical finance because they combine a rich
description of financial dynamics with a tractable analytical structure.
Classical examples include the Cox--Ingersoll--Ross model of interest rates
\cite{CIR1985} and the Heston stochastic-volatility model
\cite{Heston1993}. The general affine-process framework and its transform
methods are developed further in
\cite{DuffiePanSingleton2000,DuffieFilipovicSchachermayer2003}. A defining
feature of this literature is the exponential-affine transform, whose
coefficients satisfy generalized Riccati equations.

A complementary line of development comes from geometric and group-theoretic
methods in mathematical physics. The Group Approach to Quantization (GAQ) is
related to the geometric formulation of quantization associated with Kostant
and Souriau \cite{Kostant1970,Souriau1997}, while organizing the description
of a dynamical system directly in terms of an underlying Lie group
\cite{Aldaya1982,AldayaWolf1984}. GAQ has previously been applied to
financial models described by quadratic Hamiltonians
\cite{GarciaQuadratic2021}.

A detailed Heston-specific AHGQ construction is given in
\cite{Garcia2026Heston}, including its finite symplectic sector, holonomy,
momentum polarization, Mellin pricing representation, projective Riccati flow,
and numerical validation. The present paper develops the corresponding
general construction for continuous-path, time-homogeneous affine diffusions
with affine drift, affine covariance, and affine discounting or killing. The
class includes Gaussian and square-root models such as Black--Scholes,
Vasicek, CIR, and Heston.

The purpose is not to rederive the generalized Riccati equations or the affine
pricing operator as new financial formulas. Those structures are standard in
affine-process theory. The contribution is instead a geometric correspondence
between them. The affine pricing symbol is decomposed into a homogeneous
quadratic part, represented by a finite-dimensional symplectic Lie-group
sector, and a complementary affine part, represented by multiplicative
thin-path holonomy on a positive central fiber. The two parts combine in one
affine Poincar\'e--Cartan form. Momentum reduction of its characteristic field
produces the generalized Riccati flow and scalar amplitude, whereas the
coordinate representation of the same structure produces the usual affine
pricing operator. Thus the coordinate and transform descriptions are obtained
as complementary representations of a single geometric construction.

This separation also identifies which ingredients depend on strict affinity.
The main text remains within the affine class. Appendix~\ref{app:stock-dependent-default}
uses state-dependent default intensity only as a boundary case: an affine
rate-linked intensity remains inside the construction, whereas a commonly used
inverse-power equity intensity produces discrete momentum translations and
breaks finite-dimensional Riccati closure. That extension is not needed for the
main AHGQ results.

The remainder of the paper is organized as follows.
Section~\ref{sec:quantization-on-group} summarizes the elements of GAQ used in
the construction. Section~\ref{sec:affine-groupoid} introduces the affine
pricing symbol, its symplectic transport, the centrally extended Lie-group
sector, and the thin-path holonomy. Sections~\ref{sec:invariant-generators}
and~\ref{sec:poincare-cartan} develop the invariant generators and the affine
Poincar\'e--Cartan structure. Section~\ref{sec:polarization} derives the
momentum and coordinate representations. Section~\ref{sec:affine-specializations}
collects representative affine models and gives a short CIR specialization.
Section~\ref{sec:conclusion} concludes. Appendix~\ref{app:geometric-action}
discusses the associated phase-space action, while
Appendix~\ref{app:stock-dependent-default} discusses affine and inverse-power
default intensities.


\section{Summary of the Group Approach to Quantization}
\label{sec:quantization-on-group}

In this paper, we use only a few elements of the Group Approach to
Quantization (GAQ) developed by Aldaya and de Azc\'arraga
\cite{Aldaya1982}. The dynamics is encoded in differential
operators derived from an underlying group structure. Two compatible
families of operators arise: one is used to impose constraints that select
the variables retained in a representation, while the other acts on the
resulting functions. 

GAQ also introduces an additional multiplicative variable through a central
extension. In its original quantum-mechanical formulation, GAQ usually
takes \(U(1)\) as the central group, so that this variable is interpreted as
a phase. Here we take \(U=\Rplus\), so that the central coordinate acts
instead as a positive pricing or discounting scale.

A polarization specifies which variables are retained in a representation.
In the affine pricing construction below, the two relevant choices lead to
momentum and coordinate representations: the former produces the affine
transform and its Riccati dynamics, while the latter recovers the usual
pricing operator.

The standard GAQ construction follows a definite sequence. One first
specifies the group law and its central extension. The corresponding left-
and right-invariant generators are then obtained by differentiation. The
central extension determines a distinguished Cartan form, whose
characteristic fields describe the dynamics. A polarization is then
chosen from the left-invariant generators to select a representation, while
the right-invariant generators descend to operators acting in that
representation. Thus the group law, invariant generators, Cartan form, and
polarization are successive parts of a single construction.

AHGQ follows this structure for the finite Lie-group sector generated by
the homogeneous quadratic part of the affine pricing symbol. The complementary
affine sector requires the additional holonomy construction developed
below. The corresponding group structure, invariant generators, Cartan
form, characteristic dynamics, and polarizations are introduced explicitly
in the following sections as they are needed.



\section{Affine Holonomy Group Quantization}
\label{sec:affine-groupoid}

We call the framework developed in this paper \emph{Affine Holonomy Group
Quantization} (AHGQ). It separates the affine pricing symbol into two parts
with different geometric roles. The homogeneous quadratic part \(C_s\)
generates a finite-dimensional linear symplectic transport and the associated
centrally extended Lie-group structure. This is the part of the construction
that follows the standard GAQ framework: the central extension, invariant
generators, and polarizations are all defined within this finite Lie-group
sector.

The complementary part \(C_H\) produces a multiplicative factor accumulated along
paths. This path dependence is described by the thin-path groupoid
\(\mathcal G_A\) and its holonomy \(\mathcal H_H[\gamma]\). This is the
additional ingredient introduced in AHGQ to represent the affine terms that
are not contained in the finite Lie-group sector.

The construction can therefore be followed through four main ingredients.
First, the affine pricing symbol \(C_A\) contains the coefficients of the
pricing model and determines its decomposition into \(C_s\) and \(C_H\).
Second, \(C_s\) generates the linear symplectic transport \(M_s(t)\).
Third, this transport determines the centrally extended Lie group associated
with the symplectic transport and its invariant generators. Finally, the complementary sector \(C_H\) defines
the holonomy carried by the thin-path groupoid. The following sections
develop these ingredients in this order and then show how, together, they
determine the full affine pricing dynamics.

\subsection{Affine symbol \texorpdfstring{\(C_A\)}{C\_A}}
\label{sec:matrix-decomposition}

The affine pricing symbol \(C_A\) provides a phase-space
representation of the affine pricing generator. It collects in a single
function the drift, covariance, and discounting or killing coefficients of the model.
Its dependence on the state variables is affine, while its dependence on the
conjugate variables contains the first- and second-order terms associated
with the pricing operator.

Let \(d\in\mathbb N_{>0}\) and let
\(\mathcal D\subseteq\mathbb R^d\) be the state-variable domain.
We denote the financial state vector by
\(\mathbf x=(x_1,\ldots,x_d)^T\in\mathcal D\), and its conjugate variable by
\(\mathbf p=(p_1,\ldots,p_d)^T\in\mathbb R^d\). In the momentum
representation developed below, \(\mathbf p\) also plays the role of the
transform variable.

The pricing symbol separates naturally into a state-independent part
\(F(\mathbf p)\) and a part linear in the state variables. We write

\begin{equation}
C_A(\mathbf x,\mathbf p)=F(\mathbf p)+\mathbf x^T\mathbf R(\mathbf p), \label{eq:affine-symbol}
\end{equation}

where the state-independent part is

\begin{equation}
F(\mathbf p)=\mathbf b^T\mathbf p+\frac12\mathbf p^T\mathsf A_0\mathbf p-c. \label{eq:F-general}
\end{equation}

Here \(\mathbf b\in\mathbb R^d\) is the constant drift,
\(\mathsf A_0\in\mathbb R^{d\times d}\) is the state-independent covariance
component, and \(c\) is the constant discounting or killing term.

The state-dependent coefficients are collected in the vector-valued function
\(\mathbf R(\mathbf p)\):

\begin{equation}
\mathbf R(\mathbf p)=\mathsf B^T\mathbf p+\frac12\left(\mathbf p^T\mathsf A_1\mathbf p,\ldots,\mathbf p^T\mathsf A_d\mathbf p\right)^T-\mathbf d. \label{eq:R-general}
\end{equation}

We call \(\mathbf R(\mathbf p)\) the \emph{Riccati symbol}
because it becomes the right-hand side of the generalized Riccati equations
in the momentum representation derived in later sections.
Here \(\mathsf B\in\mathbb R^{d\times d}\) is the linear drift matrix,
\(\mathsf A_j\) is the covariance loading associated with the state
component \(x_j\), and \(\mathbf d\) represents state-dependent
discounting or killing.

We assume that the state-independent covariance component and the covariance loadings
are symmetric.
Financial admissibility also requires the full instantaneous covariance
matrix to satisfy

\begin{equation}
\mathsf A(\mathbf x)=\mathsf A_0+\sum_{j=1}^d x_j\mathsf A_j\succeq 0,\qquad \mathbf x\in\mathcal D. \label{eq:affine-covariance}
\end{equation}

This positivity condition is a financial requirement only; the geometric
construction itself does not require it.

For the geometric construction, it is useful to reorganize the same pricing
symbol according to the role played by its terms. We write

\begin{equation}
C_A=C_s+C_H, \label{eq:affine-symbol-decomposition}
\end{equation}

where

\begin{equation}
C_s(\mathbf x,\mathbf p) = \frac12\mathbf p^T\mathsf A_0\mathbf p + \mathbf x^T\mathsf B^T\mathbf p \label{eq:symplectic-sector}
\end{equation}

is the homogeneous quadratic sector generating the symplectic
transport, and

\begin{equation}
C_H(\mathbf x,\mathbf p) = \mathbf b^T\mathbf p-\mathbf x^T\mathbf d-c + \frac12\sum_{j=1}^d x_j\,\mathbf p^T\mathsf A_j\mathbf p \label{eq:holonomy-sector}
\end{equation}

is the complementary sector that will be represented by the affine holonomy.

\subsection{Linear symplectic transport associated with \texorpdfstring{\(C_s\)}{C\_s}}
\label{sec:linear_transport}

The homogeneous quadratic sector of the affine symbol,

\begin{equation}
C_s(\mathbf x,\mathbf p)=\frac12\mathbf p^T\mathsf A_0\mathbf p+\mathbf x^T\mathsf B^T\mathbf p, \label{eq:Cs}
\end{equation}

can be written in symplectic form as

\begin{equation}
C_s=\frac12\mathbf a^TJ_{2d}K_s\mathbf a, \label{eq:cs_symplectic}
\end{equation}

where

\begin{equation}
J_{2d}= \begin{pmatrix} 0&I_d\\ -I_d&0 \end{pmatrix}, \qquad K_s= \begin{pmatrix} -\mathsf B&-\mathsf A_0\\ 0&\mathsf B^T \end{pmatrix}, \qquad \mathbf a=(\mathbf x,\mathbf p)^T. \label{eq:Ks}
\end{equation}

Since \(\mathsf A_0^T=\mathsf A_0\),

\begin{equation}
K_s^TJ_{2d}+J_{2d}K_s=0, \label{eq:Ks-symplectic-condition}
\end{equation}

and therefore the matrix exponential generated by \(K_s\) defines the
symplectic transport

\begin{equation}
M_s(t)=e^{tK_s}\in Sp(2d,\mathbb R),\qquad M_s(t'+t)=M_s(t')M_s(t),\qquad M_s(t)^TJ_{2d}M_s(t)=J_{2d}. \label{eq:Ms-properties}
\end{equation}

By construction, this transport is determined entirely by the homogeneous
quadratic sector \(C_s\); the complementary affine sector \(C_H\) will instead be
represented by the holonomy introduced in
Section~\ref{sec:holonomyaction}.


\subsection{Lie group associated with the symplectic transport}
\label{sec:symplectic-lie-group}

Define the Lie group \(G^s\) as the semidirect product of time translations
with phase-space translations, where the latter are transported by \(M_s(t)\):

\begin{equation}
G^s:=\mathbb R\ltimes_{M_s}\mathbb R^{2d}. \label{eq:symplectic-semidirect-product}
\end{equation}

An element is written \(g=(t,\mathbf a)\), with
\(\mathbf a=(\mathbf x,\mathbf p)^T\). The multiplication is

\begin{equation}
(t',\mathbf a')\star^s(t,\mathbf a) = \bigl(t'+t,M_s(t)\mathbf a'+\mathbf a\bigr). \label{eq:symplectic-group-law}
\end{equation}

Associativity follows from the group property
\(M_s(t'+t)=M_s(t')M_s(t)\).

We use the convention\footnote{This choice is compatible with the
semidirect-product convention of Aldaya,
de Azc\'arraga, and Wolf~\cite{AldayaWolf1984}.}
in which the time coordinate of the second factor transports the
phase-space coordinate of the first factor.

The unit and inverse are

\begin{equation}
\mathbf 1=(0,\mathbf 0),\qquad (t,\mathbf a)^{-1}=\bigl(-t,-M_s(-t)\mathbf a\bigr). \label{eq:symplectic-unit-inverse}
\end{equation}

The group admits the symplectic two-cocycle

\begin{equation}
\epsilon_2^s(g',g) = \frac12\bigl(M_s(t)\mathbf a'\bigr)^TJ_{2d}\mathbf a = \frac12\mathbf a'^TM_s(t)^TJ_{2d}\mathbf a. \label{eq:symplectic-cocycle}
\end{equation}

The group property of \(M_s(t)\), together with its symplecticity, implies
that \(\epsilon_2^s\) is a real normalized group two-cocycle. Hence

\begin{equation}
\widetilde G^s=G^s\times\Rplus \label{eq:extended-symplectic-group}
\end{equation}

is the corresponding central extension, with multiplication

\begin{equation}
(g',\zeta')\widetilde\star^s(g,\zeta) = \Bigl( g'\star^s g,\, \zeta'\zeta\exp\!\bigl(\epsilon_2^s(g',g)\bigr) \Bigr). \label{eq:extended-symplectic-group-law}
\end{equation}


\subsection{Thin-path groupoid and affine holonomy}
\label{sec:holonomyaction}

Groupoids have a long history of use in quantization; in particular,
Hawkins~\cite{Hawkins2008} uses them to unify several geometric and
operator-algebraic constructions. Their role in AHGQ is more limited:
the groupoid organizes the path-dependent holonomy associated with the
complementary affine sector.

Let

\begin{equation}
\mathcal G_A\rightrightarrows G^s \label{eq:thin-path-groupoid}
\end{equation}

denote the thin-path groupoid of \(G^s\); see, for example,
\cite{Meneses2021}.

Its object manifold is \(G^s\), and its arrows are thin-homotopy classes

\begin{equation}
[\gamma]:g_0\longrightarrow g_1,\qquad g_0,g_1\in G^s, \label{eq:thin-path-arrow}
\end{equation}

with source and target maps

\begin{equation}
\mathsf s([\gamma])=g_0,\qquad \mathsf t([\gamma])=g_1. \label{eq:thin-path-source-target}
\end{equation}

The operational rule is simple: two arrows compose when the endpoint of
the first path is the starting point of the second, and composition is path
concatenation. Passing to thin-homotopy classes removes reparametrization and
retracing while retaining the path information used by the holonomy.

The complementary affine sector \(C_H\) defines the multiplicative holonomy

\begin{equation}
\mathcal H_H[\gamma]=\exp\!\left(-\int_\gamma C_H(\mathbf x,\mathbf p)\,dt\right). \label{eq:affine-holonomy}
\end{equation}

For composable paths,

\begin{equation}
\mathcal H_H[\gamma_2\circ\gamma_1]=\mathcal H_H[\gamma_2]\,\mathcal H_H[\gamma_1]. \label{eq:holonomy-composition}
\end{equation}

\begin{proposition}[Thin-path holonomy]
\label{prop:thin-path-holonomy}
Let
\(\alpha_H=C_H(\mathbf x,\mathbf p)\,dt\) on \(G^s\). The map
\(\mathcal H_H[\gamma]=\exp(-\int_\gamma\alpha_H)\) is well defined on
thin-homotopy classes of paths and is multiplicative under path
concatenation.
\end{proposition}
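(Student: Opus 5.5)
The plan is to treat \(\alpha_H=C_H(\mathbf x,\mathbf p)\,dt\) as a smooth real one-form on the manifold \(G^s\cong\mathbb R\times\mathbb R^{2d}\). Here \(t\) is the first coordinate of \(g=(t,\mathbf a)\), and \(C_H\) is the polynomial \eqref{eq:holonomy-sector}, read as a function of the \(\mathbf a\)-coordinates. The holonomy is then the exponential of the line integral of a one-form. The two claims, thin-homotopy invariance and multiplicativity, reduce to standard properties of line integrals of one-forms. I would work with piecewise smooth paths \(\gamma:[0,1]\to G^s\) and set \(\int_\gamma\alpha_H=\int_0^1\alpha_H(\dot\gamma(s))\,ds\).

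\emph{Reparametrization and concatenation.} First, for any orientation-preserving reparametrization \(\phi:[0,1]\to[0,1]\), the change-of-variables formula gives \(\int_{\gamma\circ\phi}\alpha_H=\int_\gamma\alpha_H\). Second, for composable \(\gamma_1:g_0\to g_1\) and \(\gamma_2:g_1\to g_2\), the concatenation \(\gamma_2\circ\gamma_1\) traverses \(\gamma_1\) and then \(\gamma_2\) at double speed. Splitting the parameter interval at \(1/2\) and rescaling yields
\begin{equation}
\int_{\gamma_2\circ\gamma_1}\alpha_H=\int_{\gamma_1}\alpha_H+\int_{\gamma_2}\alpha_H .
\end{equation}
Exponentiating gives \eqref{eq:holonomy-composition} at the level of representative paths. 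Reversal \(\gamma\mapsto\gamma^{-1}\) flips the sign of the integral, so a retraced segment \(\gamma^{-1}\circ\gamma\) contributes zero.

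\emph{Thin-homotopy invariance.} This is the main step. Let \(\Gamma:[0,1]^2\to G^s\) be a thin homotopy between \(\gamma_0\) and \(\gamma_1\) with fixed endpoints. By definition, \(\mathrm{rank}\,d\Gamma\le1\) everywhere, and the standard smoothing can be taken to be constant near the boundary edges \(s=0,1\). Then Stokes' theorem on the square gives
\begin{equation}
\int_{\gamma_1}\alpha_H-\int_{\gamma_0}\alpha_H=\int_{[0,1]^2}\Gamma^*d\alpha_H .
\end{equation}
The side edges contribute nothing because they map to the fixed endpoints. Since \(\Gamma^*\) of any two-form vanishes when \(d\Gamma\) has rank at most one, the right-hand side is zero. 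The integral therefore descends to thin-homotopy classes, and so does \(\mathcal H_H\).

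The only point needing care is the regularity convention for thin homotopies and for concatenation. Concatenated paths are only piecewise smooth, so I would adopt the usual sitting-instants convention, as in \cite{Meneses2021}: paths are constant near their endpoints. Under that convention concatenation is smooth, and Stokes applies directly. Multiplicativity on classes then follows from the representative-level identity, because concatenation is well defined on thin classes. Note that \(d\alpha_H=dC_H\wedge dt\) is generally nonzero, so the holonomy is genuinely path dependent. Only thin homotopies, not arbitrary homotopies, leave it invariant, and this is exactly why the groupoid is needed.
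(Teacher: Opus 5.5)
Your proposal is correct and follows the paper's proof: multiplicativity comes from reparametrization invariance and additivity of the line integral, and thin-homotopy invariance comes from Stokes' theorem on a rank-\(\le 1\) homotopy, where the pullback of the two-form \(d\alpha_H\) vanishes. Your additional care with the fixed-endpoint side edges and the sitting-instants convention simply makes explicit regularity details that the paper leaves implicit.
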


\begin{proof}
The integral of a one-form is invariant under orientation-preserving
reparametrization and changes additively under concatenation. Suppose
\(\gamma_0\) and \(\gamma_1\) are related by a thin homotopy
\(H:[0,1]^2\to G^s\), whose differential has rank at most one. By Stokes'
theorem,
\(
\int_{\gamma_1}\alpha_H-\int_{\gamma_0}\alpha_H
=\int_{[0,1]^2}H^*(d\alpha_H)
\).
Because \(d\alpha_H\) is a two-form and \(dH\) has rank at most one,
\(H^*(d\alpha_H)=0\). Hence the line integral depends only on the thin-homotopy
class. Additivity of the line integral under concatenation gives
\eqref{eq:holonomy-composition} after exponentiation.
\end{proof}

This holonomy acts only on the positive central coordinate of the bundle $\widetilde G^s\longrightarrow G^s$.
For an arrow \([\gamma]:g_0\to g_1\), the path \(\gamma\) connects the two
base points, while the associated holonomy acts on the fiber coordinate
along that path:

\begin{equation}
(g_0,\zeta)\longmapsto \bigl(g_1,\zeta\,\mathcal H_H[\gamma]\bigr). \label{eq:holonomy-fiber-action}
\end{equation}

Thus the groupoid arrow determines the transport from \(g_0\) to \(g_1\),
while \(\mathcal H_H[\gamma]\) gives the corresponding multiplicative
change in the pricing scale. This path composition is separate from the
Lie-group multiplication of \(G^s\).


\subsection{Summary of the AHGQ construction}
\label{sec:ahgq-summary}

The construction developed above consists of two distinct but complementary
parts, which may be summarized schematically as

\begin{equation}
C_s\longrightarrow K_s\longrightarrow M_s(t)\longrightarrow\widetilde G^s, \qquad C_H\longrightarrow\mathcal G_A\longrightarrow\mathcal H_H[\gamma]. \label{eq:ahgq-structure-summary}
\end{equation}

The first chain defines the finite Lie-group sector associated with the
symplectic transport, while the second describes the path-dependent affine
holonomy carried by the thin-path groupoid. These structures are not
combined into a single group law; together they determine the full affine
pricing dynamics.

\section{Invariant generators of the finite Lie-group sector}
\label{sec:invariant-generators}

The invariant vector fields are the infinitesimal form of the composition
law of the centrally extended Lie group \(\widetilde G^s\) associated with
the symplectic transport in Section~\ref{sec:symplectic-lie-group}. In the
GAQ framework, selected left-invariant fields define the constraints that
select a representation, while the right-invariant fields provide operators
acting on the resulting functions.

With the chosen composition convention, the symplectic flow generated by
\(C_s\) enters the left-invariant time generator, while the transport
\(M_s(t)\) appears explicitly in the right-invariant phase-space generators.

Table~\ref{tab:left-generators} gives the left-invariant generators, obtained
by differentiating the group composition law with respect to the unprimed
coordinates and evaluating at the identity. We write

\begin{equation}
\mathbf a=(\mathbf x,\mathbf p)^T,\qquad \nabla_{\mathbf a}=(\nabla_{\mathbf x},\nabla_{\mathbf p})^T,\qquad \Xi=\zeta\partial_\zeta. \label{eq:phase-space-gradient}
\end{equation}

\begin{table}[ht]
\centering
\begin{tabular}{ll}
\hline
Direction & Left-invariant generator of \(\widetilde G^s\) \\[1mm]
\hline
Time
&
\(\displaystyle L_t^s=\partial_t+(K_s\mathbf a)^T\nabla_{\mathbf a}\)
\\[3mm]
Phase space \(\mathbf a\)
&
\(\displaystyle \boldsymbol L_{\mathbf a}^{\,s}
=\nabla_{\mathbf a}-\frac12J_{2d}\mathbf a\,\Xi\)
\\[3mm]
Center
&
\(\displaystyle L_\zeta^s=\Xi\)
\\
\hline
\end{tabular}
\caption{Left-invariant generators of the centrally extended Lie group
\(\widetilde G^s\) associated with the symplectic transport.}
\label{tab:left-generators}
\end{table}

The right-invariant generators are given in
Table~\ref{tab:right-generators}. They are obtained by differentiating the
group composition law in Section~\ref{sec:symplectic-lie-group} with respect
to the primed coordinates and evaluating at the identity.

\begin{table}[ht]
\centering
\begin{tabular}{ll}
\hline
Direction & Right-invariant generator of \(\widetilde G^s\) \\[1mm]
\hline
Time
&
\(\displaystyle R_t^s=\partial_t\)
\\[3mm]
Phase space \(\mathbf a\)
&
\(\displaystyle \boldsymbol R_{\mathbf a}^{\,s}
=M_s(t)^T\left(\nabla_{\mathbf a}+\frac12J_{2d}\mathbf a\,\Xi\right)\)
\\[3mm]
Center
&
\(\displaystyle R_\zeta^s=\Xi\)
\\
\hline
\end{tabular}
\caption{Right-invariant generators of the centrally extended Lie group
\(\widetilde G^s\) associated with the symplectic transport.}
\label{tab:right-generators}
\end{table}


\subsection{Commutators}
\label{sec:commutators}

Let \(L_{a_i}^s\) denote the \(i\)-th component of
\(\boldsymbol L_{\mathbf a}^{\,s}\). The commutators of the left-invariant
generators of the finite Lie-group sector are shown in
Table~\ref{tab:left-commutators}.

\begin{table}[ht]
\centering
\begin{tabular}{ll}
\hline
Commutator & Value \\[1mm]
\hline
\(\displaystyle [L_{a_i}^s,L_{a_j}^s]\)
&
\(\displaystyle (J_{2d})_{ij}\,\Xi\)
\\[2mm]
\(\displaystyle [L_t^s,\boldsymbol L_{\mathbf a}^{\,s}]\)
&
\(\displaystyle -K_s^T\boldsymbol L_{\mathbf a}^{\,s}\)
\\[2mm]
\(\displaystyle [L_t^s,\Xi],\qquad
[\boldsymbol L_{\mathbf a}^{\,s},\Xi]\)
&
\(\displaystyle 0\)
\\
\hline
\end{tabular}
\caption{Commutators of the left-invariant generators of the centrally
extended Lie group \(\widetilde G^s\) associated with the symplectic
transport.}
\label{tab:left-commutators}
\end{table}

The right-invariant generators satisfy the corresponding Lie-algebra
relations with opposite signs. All left-invariant generators commute with
all right-invariant generators.

The holonomy associated with \(C_H\) does not modify the Lie algebra of the
finite Lie-group sector. Its infinitesimal contribution enters separately
through \(C_H\) in the full affine Poincar\'e--Cartan form developed below.

\section{Cartan forms, curvature, and characteristic dynamics}
\label{sec:poincare-cartan}

The affine Poincar\'e--Cartan form combines the contribution of the finite
Lie-group sector generated by \(C_s\) with the holonomy contribution
associated with \(C_H\). Its curvature determines the
characteristic field of the full affine system, which will later yield
the affine momentum dynamics and the generalized Riccati equations. The full
coordinate pricing operator is obtained separately through the coordinate
representation.

We distinguish three related uses of the term \emph{Cartan form}. The
\emph{Maurer--Cartan} form is the canonical Lie-algebra-valued one-form of a Lie
group. In the GAQ construction, we use only its component along the central
generator; this scalar left-invariant one-form is denoted by \(\Theta_s\)
and will be called the \emph{Cartan form} of the finite Lie-group sector. After the
holonomy contribution associated with \(C_H\) is included, the resulting
one-form \(\Theta\) describes the full affine dynamics and will be called
the affine \emph{Poincar\'e--Cartan form}.

\subsection{Cartan form and curvature of the finite Lie-group sector}
\label{sec:finite-cartan-curvature}

The Cartan form \(\Theta_s\) of the finite Lie-group sector is the central
component of the left Maurer--Cartan form of the central extension
\(\widetilde G^s\). Equivalently, it is the left-invariant one-form dual to
the central generator \(\Xi\). It therefore extracts the central component of a
left-invariant vector field. It is normalized on the central direction and
horizontal on the noncentral left-invariant generators:

\begin{equation}
\Theta_s(\Xi)=1,\qquad \Theta_s(\boldsymbol L_{\mathbf a}^{\,s})=\mathbf 0,\qquad \Theta_s(L_t^s)=0. \label{eq:Theta-s-conditions}
\end{equation}

Using the left-invariant generators in Table~\ref{tab:left-generators} and
the expression for \(C_s\) in \eqref{eq:cs_symplectic}, we obtain

\begin{equation}
\Theta_s=\frac{d\zeta}{\zeta}-\frac12\mathbf a^TJ_{2d}\,d\mathbf a+C_s(\mathbf x,\mathbf p)\,dt. \label{eq:Theta-s}
\end{equation}

The corresponding curvature form is

\begin{equation}
\omega_s=d\Theta_s=-\frac12\,d\mathbf a^TJ_{2d}\wedge d\mathbf a+dC_s\wedge dt. \label{eq:omega-s}
\end{equation}

In state--momentum coordinates this becomes

\begin{equation}
\omega_s=-d\mathbf x^T\wedge d\mathbf p+dC_s\wedge dt. \label{eq:omega-s-xp}
\end{equation}

For horizontal vector fields \(X\) and \(Y\),
\(\omega_s(X,Y)=-\Theta_s([X,Y])\). Thus the curvature detects whether
their commutator develops a component in the central direction.

Since

\begin{equation}
dC_s=(\mathsf B^T\mathbf p)^T d\mathbf x+(\mathsf A_0\mathbf p+\mathsf B\mathbf x)^T d\mathbf p, \label{eq:dCs}
\end{equation}

the curvature may also be written explicitly as

\begin{equation}
\omega_s=-d\mathbf x^T\wedge d\mathbf p+\left[(\mathsf B^T\mathbf p)^T d\mathbf x+(\mathsf A_0\mathbf p+\mathsf B\mathbf x)^T d\mathbf p\right]\wedge dt. \label{eq:omega-s-explicit}
\end{equation}

\subsection{Holonomy contribution and affine Poincar\'e--Cartan form}
\label{sec:holonomy-affine-cartan}

The complementary affine sector \(C_H\) acts through the multiplicative holonomy
\(\mathcal H_H[\gamma]\) defined in equation \eqref{eq:affine-holonomy}. For a
fixed-time spatial path, \(dt=0\), and therefore
$\mathcal H_H[\gamma_{\rm sp}]=1$.
Thus the holonomy does not alter the spatial left-invariant generators of the finite Lie-group sector.

For an infinitesimal time displacement,

\begin{equation}
\mathcal H_H=1-C_H(\mathbf x,\mathbf p)\,dt+o(dt), \label{eq:holonomy-infinitesimal}
\end{equation}

Thus its infinitesimal action on the positive central fiber is generated
by the vertical field

\begin{equation}
V_H=-C_H(\mathbf x,\mathbf p)\,\Xi. \label{eq:holonomy-vertical-field}
\end{equation}

This adds a vertical contribution to the time generator of the finite
Lie-group sector. The resulting affine time lift is

\begin{equation}
E_A=L_t^s+V_H=\partial_t+(K_s\mathbf a)^T\nabla_{\mathbf a}-C_H(\mathbf x,\mathbf p)\,\Xi. \label{eq:affine-time-lift}
\end{equation}

The affine Poincar\'e--Cartan form \(\Theta\) is normalized on the central
generator and horizontal on the spatial left-invariant generators and the
affine time lift:

\begin{equation}
\Theta(\Xi)=1,\qquad \Theta(\boldsymbol L_{\mathbf a}^{\,s})=\mathbf 0,\qquad \Theta(E_A)=0. \label{eq:affine-PC-conditions}
\end{equation}

From the defining properties of \(\Theta_s\) and the expression for $V_H$ in equation
\eqref{eq:holonomy-vertical-field},

\begin{equation}
\Theta_s(E_A)=\Theta_s(L_t^s)-C_H\,\Theta_s(\Xi)=-C_H(\mathbf x,\mathbf p). \label{eq:Theta-s-EA}
\end{equation}

Since \(dt(E_A)=1\), horizontality of \(E_A\) is restored by adding
\(C_H\,dt\). Hence

\begin{equation}
\Theta=\Theta_s+C_H(\mathbf x,\mathbf p)\,dt. \label{eq:Theta-from-holonomy}
\end{equation}

Using \(C_A=C_s+C_H\), the full affine Poincar\'e--Cartan form is therefore

\begin{equation}
\Theta=\frac{d\zeta}{\zeta}-\frac12\mathbf a^TJ_{2d}\,d\mathbf a+C_A(\mathbf x,\mathbf p)\,dt =\frac{d\zeta}{\zeta}+\frac12\mathbf p^T d\mathbf x-\frac12\mathbf x^T d\mathbf p+C_A(\mathbf x,\mathbf p)\,dt. \label{eq:affine-PC-form}
\end{equation}

The curvature of the full affine Poincar\'e--Cartan form is

\begin{equation}
\omega=d\Theta=-\frac12\,d\mathbf a^TJ_{2d}\wedge d\mathbf a+dC_A\wedge dt, 
\label{eq:affine-curvature-a}
\end{equation}

or, equivalently,

\begin{equation}
\omega=-d\mathbf x^T\wedge d\mathbf p+dC_A\wedge dt. 
\label{eq:affine-curvature}
\end{equation}

\subsection{Poincar\'e--Cartan characteristic field}
\label{sec:characteristic-direction}

The full Poincar\'e--Cartan form determines the evolution direction of the
affine system. The vector fields that are horizontal for \(\Theta\) and lie
in the kernel of its curvature form the characteristic module

\begin{equation}
\mathcal C_\Theta = \left\{ X\in\mathfrak X(\widetilde G^s): \Theta(X)=0,\qquad d\Theta(X,Y)=0\ \text{for all }Y\in\mathfrak X(\widetilde G^s) \right\}.
 \label{eq:characteristic-module}
\end{equation}

For the affine Poincar\'e--Cartan form in \eqref{eq:affine-PC-form}, the
characteristic module has rank one. Its normalized generator \(X_\Theta\),
defined by \(dt(X_\Theta)=1\), is the distinguished characteristic field:

\begin{equation}
X_\Theta=\partial_t+\mathbf R(\mathbf p)^T\nabla_{\mathbf p}-\bigl(\nabla_{\mathbf p}C_A(\mathbf x,\mathbf p)\bigr)^T\nabla_{\mathbf x}+\eta_A(\mathbf x,\mathbf p)\Xi.
 \label{eq:xtheta}
\end{equation}

Using

\begin{equation}
\nabla_{\mathbf x}C_A=\mathbf R(\mathbf p),\qquad \nabla_{\mathbf p}C_A=\mathbf b+\mathsf B\mathbf x+\mathsf A(\mathbf x)\mathbf p, 
\label{eq:CA-gradients}
\end{equation}

the characteristic field may be written as

\begin{equation}
X_\Theta=X_A+\eta_A(\mathbf x,\mathbf p)\Xi, 
\label{eq:xtheta-decomposition}
\end{equation}

where

\begin{equation}
X_A=\partial_t+\mathbf R(\mathbf p)^T\nabla_{\mathbf p}-\bigl(\mathbf b+\mathsf B\mathbf x+\mathsf A(\mathbf x)\mathbf p\bigr)^T\nabla_{\mathbf x}, 
\label{eq:XA}
\end{equation}

and

\begin{equation}
\begin{aligned}
\eta_A(\mathbf x,\mathbf p)  & = \frac12 \left( \mathbf p^T\nabla_{\mathbf p} + \mathbf x^T\nabla_{\mathbf x}  \right) C_A - C_A  \\
& =  c-\frac12\mathbf b^T\mathbf p+\frac14\mathbf p^T\bigl(\mathsf A(\mathbf x)-\mathsf A_0\bigr)\mathbf p+\frac12\mathbf x^T\mathbf d.
 \label{eq:etaA}
\end{aligned}
\end{equation}

The linear drift matrix \(\mathsf B\) therefore does not contribute to the
central term \(\eta_A\).

\section{Spatial polarization}
\label{sec:polarization}

A spatial polarization selects the representation variables by removing
half of the phase-space variables. In AHGQ it is defined by a maximal
subalgebra of compatible spatial left-invariant generators that is horizontal
for the Cartan form \(\Theta_s\) and closed under commutators.

Since the state generators commute among themselves, as do the momentum
generators,
they define two complementary spatial polarization subalgebras. The momentum
polarization is

\begin{equation}
\mathcal P_{\rm mom}^s = \operatorname{span}_{\mathbb R} \{L_{x_1}^s,\ldots,L_{x_d}^s\}, \label{eq:momentum-spatial-polarization}
\end{equation}

which leads to a representation in the momentum variables \(\mathbf p\), while
the coordinate polarization is

\begin{equation}
\mathcal P_{\rm coord}^s = \operatorname{span}_{\mathbb R} \{L_{p_1}^s,\ldots,L_{p_d}^s\}, \label{eq:coordinate-spatial-polarization}
\end{equation}

which leads to a representation in the state variables \(\mathbf x\).

The relation with the standard GAQ polarization becomes clearer when the
quadratic time generator is included. From

\begin{equation}
[L_t^s,\boldsymbol L_{\mathbf a}^{\,s}] = -K_s^T\boldsymbol L_{\mathbf a}^{\,s}, \label{eq:time-phase-space-commutator}
\end{equation}

one obtains

\begin{equation}
[L_t^s,\boldsymbol L_{\mathbf x}^{\,s}] = \mathsf B^T\boldsymbol L_{\mathbf x}^{\,s} \in \mathcal P_{\rm mom}^s, \qquad [L_t^s,\boldsymbol L_{\mathbf p}^{\,s}] = \mathsf A_0\boldsymbol L_{\mathbf x}^{\,s} -\mathsf B\boldsymbol L_{\mathbf p}^{\,s} \notin \mathcal P_{\rm coord}^s \quad (\mathsf A_0\neq0). \label{eq:time-spatial-polarization-commutators}
\end{equation}

Hence the momentum spatial polarization is invariant under the homogeneous
quadratic evolution, and

\begin{equation}
\widehat{\mathcal P}_{\rm mom}^s = \operatorname{span}_{\mathbb R} \{L_t^s,L_{x_1}^s,\ldots,L_{x_d}^s\} \label{eq:extended-momentum-polarization}
\end{equation}

is a closed first-order polarization of the finite Lie-group sector.

The coordinate spatial polarization is preserved by \(L_t^s\) only when
\(\mathsf A_0=0\), since in that case

\begin{equation}
[L_t^s,\boldsymbol L_{\mathbf p}^{\,s}] = -\mathsf B\boldsymbol L_{\mathbf p}^{\,s} \in \mathcal P_{\rm coord}^s. \label{eq:coordinate-polarization-time-commutator}
\end{equation}

For models with no state-independent covariance component
($\mathsf A_0=0$), including the Heston family and CIR, both the momentum and coordinate spatial
polarizations extend, together with $L_t^s$, to first-order GAQ polarizations of
the finite Lie-group sector. These first-order polarizations describe only the evolution
generated by $C_s$. The remaining affine contribution is carried by $C_H$.


\subsection{Polarized functions}
\label{sec:polarized-functions}

In AHGQ, the representation spaces associated with a spatial polarization
are constructed from functions on the principal bundle
\(\widetilde G^s\to G^s\). 
Let \(\mathcal F=C^\infty(\widetilde G^s,\mathbb C)\) denote the space of
smooth complex-valued functions on \(\widetilde G^s\). For a spatial
polarization subalgebra \(\mathcal P\), the corresponding polarized function
space is

\begin{equation}
\mathcal F_{\mathcal P}=\left\{\Psi\in\mathcal F:X\Psi=0\ \text{for every }X\in\mathcal P,\qquad \Xi\Psi=\Psi\right\}. \label{eq:polarized-function-space}
\end{equation}

The condition \(\Xi\Psi=\Psi\) imposes degree-one homogeneity in the
positive central coordinate \(\zeta\), so that \(\zeta\) is not an
independent physical variable. 

Different choices of \(\mathcal P\) give different representations of the
same finite Lie-group sector. The thin-path groupoid \(\mathcal G_A\) acts
separately through its holonomy, which rescales the positive central fiber.


\subsection{Momentum polarization}
\label{sec:momentum-polarization}

For a momentum-polarized function \(\Psi_{\rm mom} \in \mathcal F_{\mathcal P_{\rm mom}^s}\), the
polarization conditions are

\begin{equation}
L_{x_j}^s\Psi_{\rm mom}=0,\qquad j=1,\ldots,d,\qquad \Xi\Psi_{\rm mom}=\Psi_{\rm mom}. \label{eq:momentum-polarization-conditions}
\end{equation}

Their general solution is

\begin{equation}
\Psi_{\rm mom}=\zeta\exp\!\Bigl(\frac12\mathbf x^T\mathbf p\Bigr)\chi(t,\mathbf p), \label{eq:momentum-polarized-function}
\end{equation}

where \(\chi(t,\mathbf p)\) is the reduced momentum-space function.

The affine characteristic field \(X_\Theta\) preserves the
momentum-polarized function space. Using the polarization conditions
\(L_{x_j}^s\Psi_{\rm mom}=0\) and
\(\Xi\Psi_{\rm mom}=\Psi_{\rm mom}\), together with

\begin{equation}
[X_\Theta,L_{x_j}^s]=\sum_{i=1}^d\frac{\partial R_j}{\partial p_i}L_{x_i}^s, \qquad [X_\Theta,\Xi]=0, \label{eq:characteristic-polarization-commutators}
\end{equation}

we obtain that \(X_\Theta\Psi_{\rm mom} \in \mathcal F_{\mathcal P_{\rm mom}^s}\):

\begin{equation}
\begin{aligned} L_{x_j}^s(X_\Theta\Psi_{\rm mom})  & = X_\Theta(L_{x_j}^s\Psi_{\rm mom}) - [X_\Theta,L_{x_j}^s]\Psi_{\rm mom} =  0,  \\ \Xi(X_\Theta\Psi_{\rm mom}) & = X_\Theta(\Xi\Psi_{\rm mom}) - [X_\Theta,\Xi]\Psi_{\rm mom} = X_\Theta\Psi_{\rm mom}. \end{aligned} \label{eq:characteristic-preserves-momentum-polarization}
\end{equation}


\subsection{Momentum characteristic operator}
\label{sec:momentum-characteristic-operator}

For a momentum-polarized function, the dependence on \(\zeta\) and
\(\mathbf x\) is fixed, so that

\begin{equation}
\Psi_{\rm mom} = \zeta\exp\!\Bigl(\frac12\mathbf x^T\mathbf p\Bigr)\chi(t,\mathbf p). \label{eq:momentum-polarized-function-repeated}
\end{equation}

We therefore define the momentum reduction map by

\begin{equation}
\mathcal R_{\rm mom}\Psi=\zeta^{-1}\exp\!\Bigl(-\frac12\mathbf x^T\mathbf p\Bigr)\Psi,\qquad \mathcal R_{\rm mom}\Psi_{\rm mom}=\chi(t,\mathbf p). \label{eq:momentum-reduction-map}
\end{equation}

We call $\chi(t,\mathbf p)$ a reduced momentum polarized function.  Since \(X_\Theta\) preserves \(\mathcal F_{\mathcal P_{\rm mom}^s}\), its momentum
realization is defined by

\begin{equation}
X_\Theta^{\rm mom} \chi(t,\mathbf p)=\mathcal R_{\rm mom}\!\left(X_\Theta\Psi_{\rm mom}\right). \label{eq:momentum-reduced-action}
\end{equation}

Substitution of \eqref{eq:momentum-polarized-function} gives

\begin{equation}
X_\Theta^{\rm mom}\chi(t,\mathbf p)=\Bigl[\partial_t+\mathbf R(\mathbf p)^T\nabla_{\mathbf p}+
\frac12\mathbf x^T\mathbf R(\mathbf p)-\frac12\mathbf p^T\nabla_{\mathbf p}C_A+\eta_A\Bigr]\chi(t,\mathbf p). \label{eq:momentum-reduced-intermediate}
\end{equation}

Using

\begin{equation}
\frac12\mathbf x^T\mathbf R(\mathbf p)-\frac12\mathbf p^T\nabla_{\mathbf p}C_A+\eta_A=-F(\mathbf p), \label{eq:momentum-reduction-identity}
\end{equation}

the dependence on \(\mathbf x\) cancels, and the momentum characteristic
operator reduces to

\begin{equation}
X_\Theta^{\rm mom}=\partial_t+\mathbf R(\mathbf p)^T\nabla_{\mathbf p}-F(\mathbf p). \label{eq:xmom}
\end{equation}

\begin{proposition}[Momentum reduction and Riccati flow]
\label{prop:momentum-riccati}
The affine characteristic field \(X_\Theta\) preserves the momentum-polarized
space \(\mathcal F_{\mathcal P_{\rm mom}^s}\). On reduced functions its action
is the first-order operator \eqref{eq:xmom}. Consequently its characteristics
satisfy
\(\dot{\mathbf p}=\mathbf R(\mathbf p)\), while the scalar amplitude satisfies
\(\dot\chi=F(\mathbf p)\chi\). The momentum trajectory is therefore the
generalized Riccati flow associated with the affine symbol
\eqref{eq:affine-symbol}.
\end{proposition}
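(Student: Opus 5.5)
The plan is to assemble the proposition from the three computations already set up in Section~\ref{sec:momentum-polarization} and Section~\ref{sec:momentum-characteristic-operator}: first verify the commutators \eqref{eq:characteristic-polarization-commutators}, then verify the reduction identity \eqref{eq:momentum-reduction-identity}, and finally read off the characteristic system of the first-order operator \eqref{eq:xmom}.

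\emph{Step 1 (invariance of the polarized space).} From Table~\ref{tab:left-generators} and \(J_{2d}\mathbf a=(\mathbf p,-\mathbf x)^T\), the state generators are \(L_{x_j}^s=\partial_{x_j}-\tfrac12 p_j\,\Xi\). I will compute \([X_\Theta,L_{x_j}^s]\) directly from \eqref{eq:xtheta}.
\begin{itemize}
\item Applying \(X_\Theta\) to the coefficient \(-\tfrac12 p_j\) of \(L_{x_j}^s\) gives \(-\tfrac12 R_j\,\Xi\).
\item Applying \(\partial_{x_j}\) to the coefficients of \(X_\Theta\) gives \(-\sum_i\partial_{x_j}\partial_{p_i}C_A\,\partial_{x_i}=-\sum_i\partial_{p_i}R_j\,\partial_{x_i}\) and \(\partial_{x_j}\eta_A\,\Xi\). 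Here I use \(\nabla_{\mathbf x}C_A=\mathbf R(\mathbf p)\) from \eqref{eq:CA-gradients}.
\item The first line of \eqref{eq:etaA} gives \(\partial_{x_j}\eta_A=\tfrac12\mathbf p^T\nabla_{\mathbf p}R_j-\tfrac12R_j\).
\end{itemize}
Combining these, the \(R_j\Xi\) terms cancel and
\[
[X_\Theta,L_{x_j}^s]=\sum_i\partial_{p_i}R_j\bigl(\partial_{x_i}-\tfrac12p_i\Xi\bigr)=\sum_i\partial_{p_i}R_j\,L_{x_i}^s .
\]
The commutator \([X_\Theta,\Xi]=0\) holds because no coefficient of \(X_\Theta\) depends on \(\zeta\). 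With these two relations, \eqref{eq:characteristic-preserves-momentum-polarization} shows that \(X_\Theta\) maps \(\mathcal F_{\mathcal P_{\rm mom}^s}\) into itself.

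\emph{Step 2 (reduced operator).} I will check that \eqref{eq:momentum-polarized-function} solves \(L_{x_j}^s\Psi=0\) and \(\Xi\Psi=\Psi\), since \(\partial_{x_j}\) produces exactly \(\tfrac12p_j\Psi\). I then apply \(X_\Theta\) to \(\zeta e^{\mathbf x^T\mathbf p/2}\chi\):
\begin{itemize}
\item \(\mathbf R^T\nabla_{\mathbf p}\) produces \(\tfrac12\mathbf x^T\mathbf R\) together with \(\mathbf R^T\nabla_{\mathbf p}\chi\);
\item \(-(\nabla_{\mathbf p}C_A)^T\nabla_{\mathbf x}\) produces \(-\tfrac12\mathbf p^T\nabla_{\mathbf p}C_A\);
\item \(\eta_A\Xi\) produces \(\eta_A\).
\end{itemize}
This gives \eqref{eq:momentum-reduced-intermediate}. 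Inserting the first line of \eqref{eq:etaA}, the \(\mathbf p^T\nabla_{\mathbf p}C_A\) terms cancel and the zeroth-order coefficient becomes \(\mathbf x^T\mathbf R-C_A\). By \eqref{eq:affine-symbol} this equals \(-F(\mathbf p)\), which is \eqref{eq:momentum-reduction-identity} and yields \eqref{eq:xmom}.

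\emph{Step 3 (characteristics).} The operator \eqref{eq:xmom} is first order with principal part \(\partial_t+\mathbf R^T\nabla_{\mathbf p}\), so its characteristic curves satisfy \(\dot{\mathbf p}=\mathbf R(\mathbf p)\). Along such a curve, \(X_\Theta^{\rm mom}\chi=0\) reads \(\frac{d}{dt}\chi(t,\mathbf p(t))=F(\mathbf p(t))\chi\). Comparing with \eqref{eq:R-general} and \eqref{eq:F-general} identifies these as the generalized Riccati equation and its scalar amplitude equation.

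The main obstacle is the sign bookkeeping in Step 1, where the cancellation of the \(R_j\Xi\) terms hinges on the exact form of \(\eta_A\). I will do that computation first and cross-check it against the identity in Step 2, since both rely on the same Euler-type expression in \eqref{eq:etaA}.
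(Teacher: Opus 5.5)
Your proposal is correct and follows the paper's proof. Both argue preservation via the commutators \eqref{eq:characteristic-polarization-commutators}, pass to the reduced operator via the identity \eqref{eq:momentum-reduction-identity}, and read off the characteristics of \eqref{eq:xmom}. The difference is that you derive \([X_\Theta,L_{x_j}^s]=\sum_i\partial_{p_i}R_j\,L_{x_i}^s\) and the reduction identity from the Euler form of \(\eta_A\), whereas the paper only asserts them; your signs and cancellations check out.
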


\begin{proof}
Preservation follows from the commutators
\eqref{eq:characteristic-polarization-commutators} and the polarization
conditions. Applying the reduction map \eqref{eq:momentum-reduction-map} to
\(X_\Theta\Psi_{\rm mom}\) gives
\eqref{eq:momentum-reduced-intermediate}; the identity
\eqref{eq:momentum-reduction-identity} removes the state dependence and yields
\eqref{eq:xmom}. The characteristic equations of this reduced first-order
operator are the stated equations.
\end{proof}
\subsection{European claims in the momentum representation}
\label{sec:european-momentum-representation}

In the momentum representation, each initial momentum \(\mathbf u\) labels an
exponential mode \(\exp(\mathbf u^T\mathbf x)\). The reduced polarized function
\(\chi(t,\mathbf p)\) describes the amplitude generated along the
corresponding momentum characteristic and is normalized at expiry by
\(\chi(0,\mathbf u)=1\).

If a European payoff can be represented as a superposition of these exponential
modes, the momentum
propagator evolves each mode away from expiry, and the coordinate-space
price is obtained by superposing the propagated modes with the same payoff
coefficients.

\subsubsection{Momentum flow and propagator}

\label{sec:flows_momentum_propagator}

In this section, we use reverse time, with \(t=0\) corresponding to expiry.

To construct the momentum propagator, we label the initial momentum by \(\mathbf u\) and normalize the amplitude at the payoff time \(t=0\) by \(\chi(0,\mathbf u)=1\).
The subsequent evolution is determined by the \(X_\Theta^{\rm mom}\chi=0\) flows.

\begin{equation}
\frac{d\mathbf p}{dt}=\mathbf R(\mathbf p),\qquad \frac{d}{dt}\chi(t,\mathbf p(t))=F(\mathbf p(t))\chi(t,\mathbf p(t)),\qquad \mathbf p(0)=\mathbf u,\qquad \chi(0,\mathbf u)=1. 
\label{eq:momentum-characteristic-system}
\end{equation}

Writing \(\mathbf p(t)=\psi(t,\mathbf u)\), the momentum trajectory satisfies the generalized Riccati system of the affine model.

\begin{equation}
\partial_t\psi(t,\mathbf u)=\mathbf R\!\bigl(\psi(t,\mathbf u)\bigr),\qquad \psi(0,\mathbf u)=\mathbf u. 
\label{eq:generalized-riccati}
\end{equation}

The corresponding scalar amplitude is described by

\begin{equation}
\partial_t\phi(t,\mathbf u)=F\!\bigl(\psi(t,\mathbf u)\bigr),\qquad \phi(0,\mathbf u)=0,\qquad  \longrightarrow \quad
\phi(t,\mathbf u)=\int_0^t F\!\bigl(\psi(s,\mathbf u)\bigr)\,ds.
 \label{eq:phi-characteristic}
\end{equation}

At expiry, the initial momentum \(\mathbf u\) labels the exponential
transform component \(\exp(\mathbf u^T\mathbf x)\) appearing in the
representation of the payoff. Under the momentum flow,
\(\mathbf u\) evolves to \(\psi(t,\mathbf u)\), while the
corresponding amplitude is \(\exp\!\bigl(\phi(t,\mathbf u)\bigr)\).
The momentum propagator is therefore

\begin{equation}
K(t;\mathbf x;\mathbf u)
=
\exp\!\left(
\phi(t,\mathbf u)
+
\left\langle \psi(t,\mathbf u),\mathbf x\right\rangle
\right).
\label{eq:momentum-propagator}
\end{equation}

where \(\phi(t,\mathbf u)\) is scalar-valued and \(\psi(t,\mathbf u)\) is the transported momentum vector.

Thus \(X_\Theta^{\rm mom}\) determines the affine transform propagator: the momentum flow gives the generalized Riccati system for \(\psi\), while the scalar amplitude equation determines \(\phi\). When a component of the Riccati system reduces to a scalar equation, as in Heston or CIR, it may be integrated through a projective \(SL(2,\mathbb C)\) flow.

Finally, let \(\widehat f(\mathbf u)\) denote the transform coefficient of the payoff, so that the payoff in coordinate space is

\begin{equation}
f(\mathbf x)= \int_\Gamma \widehat f(\mathbf u)\exp(\mathbf u^T\mathbf x)\,d\mathbf u.
\label{eq:payoff-transform}
\end{equation}

The contour and normalization depend on the transform being used.  The European claim price is obtained by superposing the propagated transform
components:

\begin{equation}
V(t,\mathbf x)=\int_\Gamma K(t,\mathbf x;\mathbf u)\widehat f(\mathbf u)\,d\mathbf u.
\label{eq:affine-transform-price}
\end{equation}

\subsection{Coordinate polarization and pricing operator}
\label{sec:coordinate-representation}

The coordinate-polarized function space
\(\mathcal F_{\rm coord}:=\mathcal F_{\mathcal P_{\rm coord}^s}\) is, from
Section~\ref{sec:polarization},

\begin{equation}
\mathcal F_{\rm coord}=\left\{\Psi\in\mathcal F:L_{p_j}^s\Psi=0,\ j=1,\ldots,d,\qquad \Xi\Psi=\Psi\right\}. \label{eq:polarization_coord}
\end{equation}

The conditions in \eqref{eq:polarization_coord} give

\begin{equation}
\Psi_{\rm coord}=\zeta\exp\!\Bigl(-\frac12\mathbf x^T\mathbf p\Bigr)V(t,\mathbf x),\qquad \Psi_{\rm coord}\in\mathcal F_{\rm coord}, \label{eq:coordinate-polarized-function}
\end{equation}

For a coordinate-polarized function, the dependence on \(\zeta\) and
\(\mathbf p\) is fixed by the polarization conditions and can therefore be
factored out. The remaining function \(V(t,\mathbf x)\) is the reduced
coordinate representation.

Define the canonical phase-space operators \(\mathbf P\) and \(\mathbf X\) by applying the inverse transpose of the symplectic transport to the spatial right-invariant fields in Table~\ref{tab:right-generators}:

\begin{equation}
\begin{pmatrix}\mathbf P\\-\mathbf X\end{pmatrix}=M_s(t)^{-T}\begin{pmatrix}\boldsymbol R_{\mathbf x}^{\,s}\\\boldsymbol R_{\mathbf p}^{\,s}\end{pmatrix}. \label{eq:right-generators-PX0}
\end{equation}

This gives

\begin{equation}
\mathbf P:=\nabla_{\mathbf x}+\frac12\mathbf p\,\Xi,\qquad \mathbf X:=-\nabla_{\mathbf p}+\frac12\mathbf x\,\Xi. \label{eq:right-generators-PX}
\end{equation}

Right-invariant generators preserve \(\mathcal F_{\rm coord}\). The same holds for \(\mathbf P\) and \(\mathbf X\), which are time-dependent linear combinations of the spatial right-invariant fields. Their action on a coordinate-polarized function is

\begin{equation}
\mathbf P\Psi_{\rm coord}=\zeta e^{-\frac12\mathbf x^T\mathbf p}\nabla_{\mathbf x}V(t,\mathbf x),\qquad \mathbf X\Psi_{\rm coord}=\zeta e^{-\frac12\mathbf x^T\mathbf p}\mathbf x\,V(t,\mathbf x). \label{eq:PX-coordinate-action}
\end{equation}

Hence, on the reduced coordinate representation, where \(\Xi\mapsto1\),

\begin{equation}
\mathbf P\longmapsto\nabla_{\mathbf x},\qquad \mathbf X\longmapsto\mathbf x,\qquad [\mathbf P,\mathbf X^T]\longmapsto I_d. \label{eq:PX-coordinate-reduction}
\end{equation}

The transport interpretation together with this reduced action identifies
\(\mathbf P\) and \(\mathbf X\) as the operator representatives of the
phase-space variables \(\mathbf p\) and \(\mathbf x\) entering the affine symbol
\(C_A(\mathbf x,\mathbf p)\).  Therefore, we define

\begin{equation}
\widehat C_A^{\,R}:=F(\mathbf P)+\mathbf X^T\mathbf R(\mathbf P), \label{eq:affine-right-operator}
\end{equation}

where we adopt the operator ordering suggested by the displayed affine form
\(C_A(\mathbf x,\mathbf p)=F(\mathbf p)+\mathbf x^T\mathbf R(\mathbf p)\).
Using \eqref{eq:PX-coordinate-action}, the action of
\(\widehat C_A^{\,R}\) on a coordinate-polarized function is

\begin{equation}
\widehat C_A^{\,R}\Psi_{\rm coord}=\zeta e^{-\frac12\mathbf x^T\mathbf p}\left[F(\nabla_{\mathbf x})+\mathbf x^T\mathbf R(\nabla_{\mathbf x})\right]V(t,\mathbf x)=\zeta e^{-\frac12\mathbf x^T\mathbf p}\mathcal L_A V(t,\mathbf x). \label{eq:affine-right-operator-action}
\end{equation}

Therefore the reduced coordinate pricing operator \(\mathcal L_A\) is given by

\begin{equation}
\mathcal L_A=(\mathbf b+\mathsf B\mathbf x)^T\nabla_{\mathbf x}+\frac12\operatorname{Tr}\!\left[\mathsf A(\mathbf x)\nabla_{\mathbf x}^2\right]-\bigl(c+\mathbf d^T\mathbf x\bigr). 
\label{eq:affine-coordinate-generator}
\end{equation}

Preservation of the polarization does not determine the pricing
operator uniquely. For instance, every differential operator generated by the
right-invariant fields preserves \(\mathcal F_{\rm coord}\). The
representative \eqref{eq:affine-right-operator} is distinguished here
because it applies the canonical operators \eqref{eq:right-generators-PX}
directly to the affine symbol \(C_A\).

As an additional check, we verify that the reduced operator \(\mathcal L_A\)
is consistent with the momentum representation. For brevity, in the formulas
below we suppress the arguments and write
\(\psi=\psi(t,\mathbf u)\),
\(\phi=\phi(t,\mathbf u)\), and \(K=K(t,\mathbf x;\mathbf u)\).

The momentum propagator \(K\) defined in
\eqref{eq:momentum-propagator} satisfies

\begin{equation}
\partial_tK=\left[F(\psi)+\mathbf x^T\mathbf R(\psi)\right]K=C_A(\mathbf x,\psi)K, \label{eq:momentum-propagator-time}
\end{equation}

where \(\partial_t\psi=\mathbf R(\psi)\) and
\(\partial_t\phi=F(\psi)\). Using
\(\nabla_{\mathbf x}K=\psi K\) and
\(\nabla_{\mathbf x}^2K=\psi\psi^TK\), the coordinate operator gives

\begin{equation}
\mathcal L_AK=\left[(\mathbf b+\mathsf B\mathbf x)^T\psi+\frac12\psi^T\mathsf A(\mathbf x)\psi-\bigl(c+\mathbf d^T\mathbf x\bigr)\right]K=C_A(\mathbf x,\psi)K. \label{eq:coordinate-propagator-action}
\end{equation}

Together with \eqref{eq:momentum-propagator-time}, this yields

\begin{equation}
(\partial_t-\mathcal L_A)K=0. \label{eq:momentum-coordinate-consistency}
\end{equation}

\begin{proposition}[Coordinate reduction and consistency]
\label{prop:coordinate-operator}
On the coordinate-polarized space, the canonical operators
\(\mathbf P\) and \(\mathbf X\) reduce to
\(\nabla_{\mathbf x}\) and multiplication by \(\mathbf x\), respectively.
The ordered affine symbol
\(F(\mathbf P)+\mathbf X^T\mathbf R(\mathbf P)\) therefore reduces to the
affine pricing operator \(\mathcal L_A\) in
\eqref{eq:affine-coordinate-generator}. Moreover, the exponential-affine
propagator obtained from Proposition~\ref{prop:momentum-riccati} satisfies the
coordinate evolution equation \eqref{eq:momentum-coordinate-consistency}.
\end{proposition}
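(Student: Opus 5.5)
The plan is to prove the three assertions in order: the reduction of \(\mathbf P,\mathbf X\), the reduction of the ordered symbol, and the propagator identity. Each reduces to a direct computation on the explicit polarized form \eqref{eq:coordinate-polarized-function}.

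\emph{Step 1: the canonical operators.} Check first that the general solution of \(L_{p_j}^s\Psi=0\), \(\Xi\Psi=\Psi\) is indeed \(\zeta e^{-\frac12\mathbf x^T\mathbf p}V(t,\mathbf x)\). This uses \(\boldsymbol L_{\mathbf p}^{\,s}=\nabla_{\mathbf p}+\frac12\mathbf x\,\Xi\), which is read off from \(-\frac12J_{2d}\mathbf a\). Next, confirm \eqref{eq:right-generators-PX} from \eqref{eq:right-generators-PX0}: since \(M_s(t)^{-T}M_s(t)^T=I\), one gets \((\mathbf P,-\mathbf X)^T=\nabla_{\mathbf a}+\frac12J_{2d}\mathbf a\,\Xi\), whose blocks are \(\nabla_{\mathbf x}+\frac12\mathbf p\,\Xi\) and \(\nabla_{\mathbf p}-\frac12\mathbf x\,\Xi\). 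Then apply these operators to \(\Psi_{\rm coord}\).
\begin{itemize}[leftmargin=*]
\item In \(\mathbf P\Psi_{\rm coord}\), the derivative of the prefactor contributes \(-\frac12\mathbf p\Psi\), which cancels against \(+\frac12\mathbf p\Psi\). This leaves \(\zeta e^{-\frac12\mathbf x^T\mathbf p}\nabla_{\mathbf x}V\).
\item In \(\mathbf X\Psi_{\rm coord}\), the term \(-\nabla_{\mathbf p}\) contributes \(+\frac12\mathbf x\Psi\), which adds to \(\frac12\mathbf x\Psi\) to give \(\mathbf x\Psi\).
\end{itemize}
Because \(\mathbf P,\mathbf X\) map \(\mathcal F_{\rm coord}\) to itself, as this computation exhibits, they can be composed. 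By induction, any polynomial in \(\mathbf P,\mathbf X\) with the \(\mathbf P\)'s to the right of the \(\mathbf X\)'s acts on \(V\) by the same polynomial with \(\mathbf P\mapsto\nabla_{\mathbf x}\) and \(\mathbf X\mapsto\mathbf x\).

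\emph{Step 2: the ordered symbol.} The components of \(\mathbf P\) commute with each other. Hence \(F(\mathbf P)\) and \(R_j(\mathbf P)\) are unambiguous constant-coefficient polynomials in \(\nabla_{\mathbf x}\), namely \(\mathbf b^T\nabla+\frac12\operatorname{Tr}[\mathsf A_0\nabla^2]-c\) and \((\mathsf B^T\nabla)_j+\frac12\operatorname{Tr}[\mathsf A_j\nabla^2]-d_j\). In \(\mathbf X^T\mathbf R(\mathbf P)\), each \(X_j\) stands to the left, so it acts last as multiplication by \(x_j\). Summing the terms gives \((\mathbf b+\mathsf B\mathbf x)^T\nabla+\frac12\operatorname{Tr}[\mathsf A(\mathbf x)\nabla^2]-(c+\mathbf d^T\mathbf x)\), using \(\mathbf x^T\mathsf B^T\nabla=(\mathsf B\mathbf x)^T\nabla\) and symmetry of the \(\mathsf A_j\). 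This is exactly \eqref{eq:affine-coordinate-generator}.

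\emph{Step 3: consistency.} From Proposition~\ref{prop:momentum-riccati} and \eqref{eq:generalized-riccati}–\eqref{eq:phi-characteristic}, differentiating \(K=\exp(\phi+\psi^T\mathbf x)\) in \(t\) gives \eqref{eq:momentum-propagator-time}. Since \(K\) is exponential-linear in \(\mathbf x\), we have \(\nabla_{\mathbf x}K=\psi K\) and \(\nabla^2_{\mathbf x}K=\psi\psi^TK\). Substituting into \(\mathcal L_A\) gives \(C_A(\mathbf x,\psi)K\), and subtracting the two expressions yields \eqref{eq:momentum-coordinate-consistency}.

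The main point needing care is the ordering claim in Step 1: \(\mathbf X\) and \(\mathbf P\) do not commute. The reduction therefore holds only for the stated normal ordering, and it relies on \(\mathcal F_{\rm coord}\) being invariant under \(\mathbf P\) and \(\mathbf X\) individually so that compositions can be reduced factor by factor. The remaining steps are routine. One should also note that \(\psi\) may be complex or defined only on a maximal existence interval; the identities hold pointwise wherever the flow exists.
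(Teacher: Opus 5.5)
Your proposal is correct and follows the paper's own proof: you compute the actions of \(\mathbf P\) and \(\mathbf X\) on \(\zeta e^{-\frac12\mathbf x^T\mathbf p}V(t,\mathbf x)\), substitute into the \(\mathbf X\)-left ordered symbol to obtain \(\mathcal L_A\), and show that \(\partial_tK\) and \(\mathcal L_AK\) both equal \(C_A(\mathbf x,\psi)K\). You also verify directly that \(\mathbf P,\mathbf X\) preserve \(\mathcal F_{\rm coord}\) and that the reduction depends on the normal ordering, which the paper asserts through right-invariance without spelling out.
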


\begin{proof}
The reduced actions of \(\mathbf P\) and \(\mathbf X\) are given by
\eqref{eq:PX-coordinate-action}--\eqref{eq:PX-coordinate-reduction}.
Substitution into the ordered symbol gives
\eqref{eq:affine-right-operator-action} and hence
\eqref{eq:affine-coordinate-generator}. For the momentum propagator,
\eqref{eq:momentum-propagator-time} and
\eqref{eq:coordinate-propagator-action} give the same factor
\(C_A(\mathbf x,\psi)\), which proves
\eqref{eq:momentum-coordinate-consistency}.
\end{proof}

Since \(K(0,\mathbf x;\mathbf u)=\exp(\mathbf u^T\mathbf x)\), we can characterize the momentum
propagator as the coordinate evolution of the exponential transform
component labeled by \(\mathbf u\).


\section{Representative affine models}
\label{sec:affine-specializations}

The purpose of this section is to display how the general AHGQ construction
specializes across representative affine models. The tables make explicit the
affine data, the decomposition into symplectic and holonomy sectors, and the
coordinate pricing operators obtained from the general formulas of the
preceding sections.

We use standard financial notation: \(\sigma\) denotes volatility,
\(\rho\) correlation, \(r\) the risk-free rate, \(\delta\) the dividend
yield, \(\kappa\) the mean-reversion speed, \(\theta\) the long-run mean,
\(v\) instantaneous variance, and \(\sigma_\nu\) the volatility of variance.
In CIR, the state variable is the short rate; in all the other models 
\(x\) represents the logarithm of the stock price $S$.

\subsection{Affine data and geometric decomposition}
\label{sec:representative-affine-models}

The tables below collect the affine data for several representative models.

Gaussian models such as Black--Scholes and Vasicek also fall within the general AHGQ construction. 
For Black--Scholes the affine symbol contains no state-dependent
covariance contribution. Vasicek is a Gaussian
mean-reverting affine model, with linear drift and state-independent
covariance. In both cases the momentum equations reduce to linear or
elementary Gaussian flows. CIR and Heston display
the state-dependent covariance and Riccati structure central to the affine
construction.

We first give the covariance loadings \(\mathsf A_j\), followed by
the functions \(F\) and \(\mathbf R\) defining
\(C_A=F+\mathbf x^T\mathbf R\). We then give the corresponding affine symbol
decomposition \(C_A=C_s+C_H\).

\begin{table}[H]
\centering
\small
\setlength{\tabcolsep}{4pt}
\renewcommand{\arraystretch}{1.5}
\begin{tabular}{
>{\raggedright\arraybackslash}m{1.8cm}
>{\raggedright\arraybackslash}m{3cm}
>{\centering\arraybackslash}m{2.5cm}
>{\centering\arraybackslash}m{8.0cm}
}
\toprule
Model & Phase-space variables & State-independent covariance \(\mathsf A_0\) & State-dependent covariance matrices \(\mathsf A_j\) \\
\midrule
Heston &
\(\mathbf x=(x,v)^T\)

\(\mathbf p=(p_x,p_v)^T\) &
\(\mathsf A_0=0_{2\times2}\) &
\(\displaystyle \mathsf A_1=0_{2\times2},\qquad \mathsf A_2=\begin{pmatrix}1&\rho\sigma_\nu\\ \rho\sigma_\nu&\sigma_\nu^2\end{pmatrix}\) \\
\midrule
Two-factor Heston &
\(\mathbf x=(x,v_1,v_2)^T\)

\(\mathbf p=(p_x,p_{v_1},p_{v_2})^T\) &
\(\mathsf A_0=0_{3\times3}\) &
\(\displaystyle \begin{aligned}\mathsf A_1&=0_{3\times3},\\[2mm]\mathsf A_2&=\begin{pmatrix}1&\rho_1\sigma_1&0\\ \rho_1\sigma_1&\sigma_1^2&0\\ 0&0&0\end{pmatrix},\\[2mm]\mathsf A_3&=\begin{pmatrix}1&0&\rho_2\sigma_2\\ 0&0&0\\ \rho_2\sigma_2&0&\sigma_2^2\end{pmatrix}\end{aligned}\) \\
\midrule
CIR &
\(\mathbf x=x\)

\(\mathbf p=p\) &
\(\mathsf A_0=0\) &
\(\mathsf A_1=\sigma^2\) \\
\midrule
Black--Scholes &
\(\mathbf x=x=\log S\)

\(\mathbf p=p\) &
\(\mathsf A_0=\sigma^2\) &
\(\mathsf A_1=0\) \\
\midrule
Vasicek &
\(\mathbf x=x\)

\(\mathbf p=p\) &
\(\mathsf A_0=\sigma^2\) &
\(\mathsf A_1=0\) \\
\bottomrule
\end{tabular}
\caption{Constant and state-dependent covariance matrices in
\(\mathsf A(\mathbf x)=\mathsf A_0+\sum_jx_j\mathsf A_j\)
for representative affine models.}
\label{tab:affine-covariance-matrices}
\end{table}

\begin{table}[H]
\centering
\small
\setlength{\tabcolsep}{4pt}
\renewcommand{\arraystretch}{1.4}
\begin{tabular*}{\textwidth}{
@{\extracolsep{\fill}}
>{\raggedright\arraybackslash}m{2.1cm}
>{\raggedright\arraybackslash}m{5.1cm}
>{\raggedright\arraybackslash}m{7.7cm}
}
\toprule
Model & State-independent part \(F(\mathbf p)\) & Riccati symbol \(\mathbf R(\mathbf p)\) \\
\midrule
Heston & \(\displaystyle F(\mathbf p)=(r-\delta)p_x+\kappa\theta p_v-r\) & \(\displaystyle \mathbf R(\mathbf p)=\left(0,\frac12(p_x^2-p_x)+(\rho\sigma_\nu p_x-\kappa)p_v+\frac12\sigma_\nu^2p_v^2\right)^T\) \\
\midrule
Two-factor Heston & \(\displaystyle F(\mathbf p)=(r-\delta)p_x+\sum_{i=1}^2\kappa_i\theta_i p_{v_i}-r\) & \(\displaystyle \begin{aligned}\mathbf R(\mathbf p)&=(0,R_1(\mathbf p),R_2(\mathbf p))^T,\\ R_i(\mathbf p)&=\frac12(p_x^2-p_x)+(\rho_i\sigma_i p_x-\kappa_i)p_{v_i}+\frac12\sigma_i^2p_{v_i}^2\end{aligned}\) \\
\midrule
CIR & \(\displaystyle F(p)=\kappa\theta p\) & \(\displaystyle R(p)=\frac12\sigma^2p^2-\kappa p-1\) \\
\midrule
Black--Scholes & \(\displaystyle F(p)=\frac12\sigma^2p^2+\left(r-\delta-\frac12\sigma^2\right)p-r\) & \(\displaystyle R(p)=0\) \\
\midrule
Vasicek & \(\displaystyle F(p)=\frac12\sigma^2p^2+\kappa\theta p\) & \(\displaystyle R(p)=-\kappa p-1\) \\
\bottomrule
\end{tabular*}
\caption{State-independent part \(F\) and Riccati symbol \(\mathbf R\) for
representative affine models. Together they reconstruct the affine pricing
symbol through
\(C_A(\mathbf x,\mathbf p)=F(\mathbf p)+\mathbf x^T\mathbf R(\mathbf p)\).
For the two-factor Heston model,
\(\mathbf R=(0, R_1(\mathbf p),R_2(\mathbf p))^T\),  hence the log-price momentum  \(p_x\)  remains
constant, while the two variance momenta satisfy independent scalar
Riccati equations conditional on \(p_x\).  }
\label{tab:affine-F-R}
\end{table}

\begin{table}[H]
\centering
\small
\setlength{\tabcolsep}{4pt}
\renewcommand{\arraystretch}{1.4}
\begin{tabular*}{\textwidth}{
@{\extracolsep{\fill}}
>{\raggedright\arraybackslash}m{2.2cm}
>{\raggedright\arraybackslash}m{4.4cm}
>{\raggedright\arraybackslash}m{7.0cm}
}
\toprule
Model & Symplectic sector \(C_s\) & Holonomy sector \(C_H\) \\
\midrule
Heston & \(\displaystyle C_s=-v\left(\frac12p_x+\kappa p_v\right)\) & \(\displaystyle C_H=(r-\delta)p_x+\kappa\theta p_v-r+\frac12v\left(p_x^2+2\rho\sigma_\nu p_xp_v+\sigma_\nu^2p_v^2\right)\) \\
\midrule
Two-factor Heston & \(\displaystyle C_s=-\sum_{i=1}^2v_i\left(\frac12p_x+\kappa_i p_{v_i}\right)\) & \(\displaystyle C_H=(r-\delta)p_x+\sum_{i=1}^2\kappa_i\theta_i p_{v_i}-r+\frac12\sum_{i=1}^2v_i\left(p_x^2+2\rho_i\sigma_i p_xp_{v_i}+\sigma_i^2p_{v_i}^2\right)\) \\
\midrule
CIR & \(\displaystyle C_s=-\kappa xp\) & \(\displaystyle C_H=\kappa\theta p-x+\frac12\sigma^2xp^2\) \\
\midrule
Black--Scholes & \(\displaystyle C_s=\frac12\sigma^2p^2\) & \(\displaystyle C_H=\left(r-\delta-\frac12\sigma^2\right)p-r\) \\
\midrule
Vasicek & \(\displaystyle C_s=\frac12\sigma^2p^2-\kappa xp\) & \(\displaystyle C_H=\kappa\theta p-x\) \\
\bottomrule
\end{tabular*}
\caption{Decomposition \(C_A=C_s+C_H\) for representative affine models.
The homogeneous quadratic sector \(C_s\) determines the finite symplectic
transport, while \(C_H\) is represented by the holonomy.}
\label{tab:affine-symbol-decomposition}
\end{table}
\subsection{Coordinate pricing operators}
\label{sec:representative-pricing-operators}

Substitution of the affine data above into the general coordinate operator
\eqref{eq:affine-coordinate-generator} gives the pricing operators in
Table~\ref{tab:representative-pricing-operators}. These operators are the
coordinate-space counterparts of the momentum-space data summarized by
\(F\) and \(\mathbf R\) in Table~\ref{tab:affine-F-R}.

\begin{table}[H]
\centering
\scriptsize
\setlength{\tabcolsep}{4pt}
\renewcommand{\arraystretch}{1.45}
\begin{tabularx}{\textwidth}{
>{\raggedright\arraybackslash}p{2.2cm}
>{\raggedright\arraybackslash}X}
\toprule
Model & Coordinate pricing operator \\
\midrule
Heston &
\(\displaystyle
\mathcal L_H=
\frac12v\,\partial_x^2
+\rho\sigma_\nu v\,\partial_x\partial_v
+\frac12\sigma_\nu^2v\,\partial_v^2
+\left(r-\delta-\frac12v\right)\partial_x
+\kappa(\theta-v)\partial_v-r
\) \\
\midrule
Two-factor Heston &
\(\displaystyle
\begin{aligned}
\mathcal L_{2H}={}&
\frac12(v_1+v_2)\partial_x^2
+\sum_{i=1}^2\rho_i\sigma_i v_i\,\partial_x\partial_{v_i}
+\frac12\sum_{i=1}^2\sigma_i^2v_i\,\partial_{v_i}^2 \\
&+\left(r-\delta-\frac12(v_1+v_2)\right)\partial_x
+\sum_{i=1}^2\kappa_i(\theta_i-v_i)\partial_{v_i}-r
\end{aligned}
\) \\
\midrule
CIR &
\(\displaystyle
\mathcal L_{\mathrm{CIR}}
=\frac12\sigma^2x\,\partial_x^2
+\kappa(\theta-x)\partial_x-x
\) \\
\midrule
Black--Scholes &
\(\displaystyle
\mathcal L_{\mathrm{BS}}
=\frac12\sigma^2\partial_x^2
+\left(r-\delta-\frac12\sigma^2\right)\partial_x-r,
\qquad x=\log S
\) \\
\midrule
Vasicek &
\(\displaystyle
\mathcal L_{\mathrm{V}}
=\frac12\sigma^2\partial_x^2
+\kappa(\theta-x)\partial_x-x
\) \\
\bottomrule
\end{tabularx}
\caption{Coordinate pricing operators obtained from the general AHGQ
coordinate representation. In the CIR and Vasicek rows, \(x\) denotes the
short rate.}
\label{tab:representative-pricing-operators}
\end{table}

A detailed application of AHGQ to the Heston model is given in
\cite{Garcia2026Heston}. That work develops the model-specific symplectic and
holonomy construction, the momentum polarization and Mellin representation,
the projective linearization of the Heston Riccati equation, numerical
validation against the standard Heston solution, and the Black--Scholes
reduction. Those Heston-specific derivations are therefore not repeated here.

\subsection{CIR as a one-dimensional specialization}
\label{sec:cir-specialization}

The CIR short-rate model gives a compact non-Heston illustration of the whole
construction. With short rate \(x\geq0\),

\begin{equation}
dx_t=\kappa(\theta-x_t)\,dt+\sigma\sqrt{x_t}\,dW_t,
\label{eq:cir-sde-specialization}
\end{equation}

and discounting at the short rate, the affine pricing symbol is

\begin{equation}
C_A(x,p)=\kappa\theta p+x\left(\frac12\sigma^2p^2-\kappa p-1\right).
\label{eq:cir-symbol-specialization}
\end{equation}

The AHGQ decomposition is

\begin{equation}
C_s(x,p)=-\kappa xp,
\qquad
C_H(x,p)=\kappa\theta p-x+\frac12\sigma^2xp^2.
\label{eq:cir-split-specialization}
\end{equation}

Thus the finite symplectic sector is generated by the homogeneous term
\(-\kappa xp\), while the constant drift, short-rate discounting, and
state-dependent covariance are carried by \(C_H\). The reduced momentum
characteristics are

\begin{equation}
\dot p=\frac12\sigma^2p^2-\kappa p-1,
\qquad
\dot\phi=\kappa\theta p,
\label{eq:cir-riccati-specialization}
\end{equation}

which are the standard CIR Riccati and amplitude equations. For a unit payoff,
\(p(0)=0\), and the momentum propagator takes the exponential-affine form

\begin{equation}
K(t;x;0)=\exp\!\bigl(\phi(t,0)+\psi(t,0)x\bigr).
\label{eq:cir-propagator-specialization}
\end{equation}

The coordinate representation gives

\begin{equation}
\mathcal L_{\mathrm{CIR}}
=\frac12\sigma^2x\,\partial_x^2
+\kappa(\theta-x)\partial_x-x,
\label{eq:cir-coordinate-specialization}
\end{equation}

and Proposition~\ref{prop:coordinate-operator} implies
\((\partial_t-\mathcal L_{\mathrm{CIR}})K=0\). The CIR example therefore
shows in the simplest state-dependent-covariance setting how the same affine
symbol produces the symplectic/holonomy decomposition, the Riccati transform,
and the coordinate pricing operator.

\section{Conclusion}
\label{sec:conclusion}

We have developed Affine Holonomy Group Quantization for continuous-path,
time-homogeneous affine pricing models. The construction separates the
homogeneous quadratic sector of the affine pricing symbol from its
complementary affine sector. The first generates a finite-dimensional
symplectic transport and its centrally extended Lie group; the second defines
a multiplicative holonomy on the positive central fiber of a thin-path
groupoid. Their combination gives the affine Poincar\'e--Cartan form and its
characteristic dynamics.

The main result is a geometric correspondence between the two standard
representations of affine pricing. Proposition~\ref{prop:momentum-riccati}
shows that momentum reduction of the characteristic field gives the generalized
Riccati system and scalar affine-transform amplitude. Proposition~\ref{prop:coordinate-operator}
shows that the coordinate representation recovers the standard affine pricing
operator, and that the exponential-affine propagator obtained from the momentum
flow solves the corresponding coordinate evolution equation. The construction
therefore does not replace the usual affine-process formulas; it organizes them
as complementary reductions of one geometric structure.

The representative models in Section~\ref{sec:affine-specializations} show how
the same construction applies to Gaussian and square-root diffusions. The CIR
specialization makes the correspondence explicit without relying on the more
detailed Heston application of \cite{Garcia2026Heston}. In particular,
state-dependent covariance appears in the holonomy sector while the resulting
momentum dynamics retain the familiar Riccati closure.

Several extensions are natural, including time-dependent affine coefficients,
jump-affine and more general L\'evy models, and matrix-valued affine state
spaces. Appendix~\ref{app:stock-dependent-default} records a further boundary
case relevant to credit modeling: affine killing remains within the present
framework, whereas an inverse-power equity intensity produces a momentum-shift
coupling and lies beyond finite-dimensional affine closure. A systematic
extension of AHGQ to such nonlocal transform dynamics is left for future work.

\clearpage
\appendix


\section{Additional geometric structure}
\label{app:geometric-action}

The Poincar\'e--Cartan formulation of AHGQ retains the phase-space action
structure familiar from GAQ and provides a connection with path-integral
formulations of derivative pricing.

\subsection{Affine phase-space action and path-integral interpretation}
\label{sec:appendix-affine-action}

The noncentral part of the affine Poincar\'e--Cartan form \(\Theta\),
defined in Section~\ref{sec:poincare-cartan}, is

\begin{equation}
\Theta_A=\frac12\mathbf p^T d\mathbf x-\frac12\mathbf x^T d\mathbf p
+C_A(\mathbf x,\mathbf p)\,dt.
\label{eq:app-pc-form}
\end{equation}

Along a time-parametrized phase-space curve
\((\mathbf x(t),\mathbf p(t))\),

\begin{equation}
\Theta_A=\mathscr L_A\,dt,\qquad
\mathscr L_A
=
\frac12\mathbf p^T\dot{\mathbf x}
-\frac12\mathbf x^T\dot{\mathbf p}
+C_A(\mathbf x,\mathbf p).
\label{eq:app-phase-lagrangian}
\end{equation}

Here a dot denotes differentiation with respect to time. The function
\(\mathscr L_A\) is the phase-space Lagrangian associated with
\(\Theta_A\). Treating \(\mathbf x\) and \(\mathbf p\) as independent
variables, the Euler--Lagrange equations are

\begin{equation}
\frac{d}{dt}\frac{\partial\mathscr L_A}{\partial\dot{\mathbf x}}
-\frac{\partial\mathscr L_A}{\partial\mathbf x}=0,\qquad
\frac{d}{dt}\frac{\partial\mathscr L_A}{\partial\dot{\mathbf p}}
-\frac{\partial\mathscr L_A}{\partial\mathbf p}=0.
\label{eq:app-euler-lagrange}
\end{equation}

Substitution of \eqref{eq:app-phase-lagrangian} gives

\begin{equation}
\dot{\mathbf p}
=
\nabla_{\mathbf x}C_A(\mathbf x,\mathbf p),\qquad
\dot{\mathbf x}
=
-\nabla_{\mathbf p}C_A(\mathbf x,\mathbf p),
\label{eq:app-hamilton-characteristics}
\end{equation}

which are precisely the phase-space components of the characteristic flow
generated by \(X_\Theta\). Along this flow,

\begin{equation}
\mathscr L_A\big|_{X_\Theta} = C_A -\frac12\mathbf p^T\nabla_{\mathbf p}C_A -\frac12\mathbf x^T\nabla_{\mathbf x}C_A = -\eta_A
\label{eq:app-characteristic-lagrangian}
\end{equation}

where $\eta_A$  is defined in \eqref{eq:etaA}. Since
\(\Theta=d\zeta/\zeta+\Theta_A\) and \(\Theta(X_\Theta)=0\), the positive
central coordinate along a characteristic satisfies

\begin{equation}
\frac{d}{dt}\log\zeta
=
-\mathscr L_A\big|_{X_\Theta}
=
\eta_A,\qquad
\zeta(t)
=
\zeta(0)
\exp\!\left(
-\int_0^t
\left.\mathscr L_A(s)\right|_{X_\Theta}\,ds
\right).
\label{eq:app-central-action}
\end{equation}

Thus, along an affine characteristic, the central \(\Rplus\) coordinate
records the exponential of the accumulated Poincar\'e--Cartan phase-space
action.

The phase-space Lagrangian \eqref{eq:app-phase-lagrangian} also connects
AHGQ with path-integral pricing. Since \(C_A\) is quadratic in momentum,
the momentum functional integral is formally Gaussian and may be reduced
to a configuration-space action. For Heston, with the appropriate midpoint
prescription, this yields the standard path-integral Lagrangian and an
exactly solvable radial harmonic-oscillator problem~\cite{Lemmens}.

\clearpage
\section{State-dependent default intensities}
\label{app:stock-dependent-default}

\subsection{Affine intensity driven by an interest-rate state}
\label{sec:affine-rate-intensity}

Let $x$ be a one-dimensional nonnegative rate state and suppose that both the
short rate and the default intensity are affine functions of $x$,

\begin{equation}
r(x)=r_0+r_1x,\qquad \lambda(x)=\lambda_0+\lambda_1x,
\qquad \lambda_0\ge 0,\quad \lambda_1\ge 0.
\label{eq:affine-rate-default-intensity}
\end{equation}

For a zero-recovery claim the pre-default generator contains the total killing
rate $r(x)+\lambda(x)$. Hence

\begin{equation}
r(x)+\lambda(x)=(r_0+\lambda_0)+(r_1+\lambda_1)x.
\label{eq:affine-total-killing}
\end{equation}

This is still affine. In the notation of Section~\ref{sec:matrix-decomposition},
adding the intensity simply changes the killing coefficients according to

\begin{equation}
c\longmapsto c+\lambda_0,\qquad
d\longmapsto d+\lambda_1,
\label{eq:affine-default-killing-shift}
\end{equation}

In the multidimensional case, if
$\lambda(\mathbf x)=\lambda_0+\boldsymbol\lambda^T\mathbf x$, the
corresponding replacement is $c\mapsto c+\lambda_0$ and
$\mathbf d\mapsto\mathbf d+\boldsymbol\lambda$. The state-independent
part $F$ and the Riccati symbol $\mathbf R$ therefore remain of exactly the form
used in the main text. No momentum translation is generated, and the ordinary
finite-dimensional affine transform remains valid.

A particularly transparent example is obtained when the rate state itself is a
CIR short rate,

\begin{equation}
dx_t=\kappa(\theta-x_t)\,dt+\sigma\sqrt{x_t}\,dW_t,
\qquad r_t=x_t,
\label{eq:cir-rate-default-state}
\end{equation}

and the issuer has affine intensity
$\lambda(x)=\lambda_0+\lambda_1x$. The zero-recovery pre-default pricing
operator is

\begin{equation}
\mathcal L_D^{\rm CIR}
=\kappa(\theta-x)\partial_x
+\frac12\sigma^2x\partial_x^2
-\lambda_0-(1+\lambda_1)x.
\label{eq:cir-affine-default-operator}
\end{equation}

The corresponding symbol separates as

\begin{equation}
C_D(x,p)=F_D(p)+xR_D(p),\qquad
F_D(p)=\kappa\theta p-\lambda_0,\qquad
R_D(p)=\frac12\sigma^2p^2-\kappa p-(1+\lambda_1).
\label{eq:cir-affine-default-symbol}
\end{equation}

Consequently the transform equations remain Riccati,

\begin{equation}
\dot\psi=\frac12\sigma^2\psi^2-\kappa\psi-(1+\lambda_1),\qquad
\dot\phi=\kappa\theta\psi-\lambda_0.
\label{eq:cir-affine-default-riccati}
\end{equation}

\subsection{Inverse-power intensity}
\label{sec:inverse-power-intensity}

The equity case is different. If the state is $x=\log S$, a linear function of
$x$ is neither automatically nonnegative on the full log-price domain nor does it
produce the strong inverse stock--credit dependence typically sought in
convertible-bond and structural-equity applications. 

Let $S$ denote the issuer's stock price and let $S_*>0$ be a reference level.
We consider the default intensity

\begin{equation}
\lambda(S)=\ell\left(\frac{S}{S_*}\right)^{-\beta},
\qquad \ell>0,\quad \beta>0.
\label{eq:inverse-power-intensity}
\end{equation}

Equivalently, if

\begin{equation*}
y=\log\left(\frac{S}{S_*}\right),
\end{equation*}

then

\begin{equation}
\lambda(y)=\ell e^{-\beta y}.
\label{eq:log-inverse-power-intensity}
\end{equation}

The form
\eqref{eq:inverse-power-intensity} has the qualitative behavior required in a
simple credit--equity model: default. Closely related
stock-dependent intensities have long been used in convertible-bond models, see, for example,
\cite{AyacheForsythVetzal2003,KovalovLinetsky2008}.

The factor $e^{-\beta y}$ is not affine.
Consequently it cannot be absorbed into the coefficients $c$ and $\mathbf d$
of the affine symbol

\begin{equation*}
C_A(\mathbf x,\mathbf p)=F(\mathbf p)+\mathbf x^T\mathbf R(\mathbf p).
\end{equation*}

A finite truncation of the exponential series provides a corresponding
hierarchy of approximations,
\begin{equation*}
e^{-\beta y}
\approx
\sum_{n=0}^{N}
\frac{(-\beta)^n}{n!} y^n.
\end{equation*}
For \(N=1\), one recovers the local affine approximation and hence a
first-order momentum equation with ordinary Riccati closure.

A systematic extension of AHGQ from affine symbols to polynomial or more
general analytic state dependence would require a corresponding enlargement
of the momentum representation and is left for future work.

\end{document}